\newcolumntype{C}[1]{>{\hsize=#1\hsize\centering\arraybackslash}X}
\def\vec#1{\mathchoice{\mbox{\boldmath$\displaystyle#1$}}
  {\mbox{\boldmath$\textstyle#1$}}
  {\mbox{\boldmath$\scriptstyle#1$}}
  {\mbox{\boldmath$\scriptscriptstyle#1$}}}
\newcommand{\mech}{\ensuremath{\mathcal{M}}\xspace}
\newtheorem{theorem}{Theorem}
\newtheorem{Definition}{Definition}
\newtheorem{Remark}{Remark}
\title{DP-SIGNSGD: WHEN EFFICIENCY MEETS Privacy and Robustness}
\name{Lingjuan Lyu
}
\address{Ant Group}
\begin{document}
%
\maketitle
\begin{sloppypar}
\begin{abstract}
Federated learning (FL) has emerged as a promising collaboration paradigm by enabling a multitude of parties to construct a joint model without exposing their private training data. Three main challenges in FL are efficiency, privacy, and robustness. The recently proposed {\scriptsize SIGN}SGD with majority vote shows a promising direction to deal with efficiency and Byzantine robustness. However, there is no guarantee that {\scriptsize SIGN}SGD is privacy-preserving. In this paper, we bridge this gap by presenting an improved method called {\scriptsize DP-SIGN}SGD, which can meet all the aforementioned properties. We further propose an error-feedback variant of {\scriptsize DP-SIGN}SGD to improve accuracy. Experimental results on benchmark image datasets demonstrate the effectiveness of our proposed methods. 
\end{abstract}
\begin{keywords}
SIGNSGD, Efficiency, Privacy, Robustness
\end{keywords}
\section{Introduction}
\label{sec:intro}

Federated learning (FL) provides a promising learning paradigm by pushing model training to local parties~\cite{mcmahan2016communication}. In the centralized optimization, communication costs are relatively small, and computational costs dominate. In contrast, in federated optimization, communication costs dominate. Up to now, it is still difficult to deploy practical FL applications in real world. To alleviate communication burden on the network, one promising solution is gradient quantization, such as SignSGD~\cite{bernstein2018signsgd1,bernstein2018signsgd2}, QSGD~\cite{alistarh2017qsgd}, TernGrad~\cite{wen2017terngrad} and ATOMO \cite{wang2018atomo}.

Meanwhile, FL offers a privacy-aware paradigm of model training which does not require data sharing. 
Nevertheless, recent works have demonstrated that FL may not always provide sufficient privacy guarantees, as communicating model updates throughout the training process can nonetheless reveal sensitive information~\cite{melis2019exploiting} even incur deep leakage~\cite{zhu2019deep}. For instance, as shown by~\cite{aono2018privacy}, even a small portion of gradients may reveal information about local data. A more recent work showed that the training data can be completely revealed from gradients in a few iterations when parties share their gradients of small batch~\cite{zhu2019deep}. Such attacks pose significant threats to FL~\cite{lyu2020threats,lyu2020privacy}, as in FL, any party may violate the privacy of other parties in the system, even without involving the server. In addition to privacy, there may exist Byzantine attackers in  FL systems, who can run data or model poisoning attacks to compromise the integrity of the learning process. 

As far as we know, there is no existing works that can satisfy multiple goals in FL simultaneously: (1) fast algorithmic convergence; (2) good generalisation performance; (3) communication efficiency; (4) fault tolerance; (5) privacy preservation. {\scriptsize SIGN}SGD with majority vote~\cite{bernstein2018signsgd1,bernstein2018signsgd2} offers a promising solution to realize the first four goals, but lacks privacy guarantee. To fill in this gap, we are inspired to add carefully calibrated noise before each party takes the sign operation~\cite{chen2019distributed}. In summary, our contributions include:
\begin{enumerate}
    \item We propose an efficient, privacy-preserving, and Byzantine robust compressor, which extends {\scriptsize SIGN}SGD to {\scriptsize DP-SIGN}SGD. We further provide an error-feedback version called {\scriptsize EF-DP-SIGN}SGD to improve accuracy.
    \item We theoretically prove the privacy guarantee of our proposed algorithms, and empirically demonstrate that our {\scriptsize DP-SIGN}SGD and {\scriptsize EF-DP-SIGN}SGD can ensure Byzantine robustness, accuracy and communication efficiency simultaneously.
\end{enumerate}

The remaining of the paper is organized as follows. Section~\ref{sec:related} reviews the related work  and preliminary in this work. Section~\ref{sec:DP-SIGNSGD} presents the technical details of our proposed {\scriptsize DP-SIGN}SGD. Section~\ref{sec:Performance} provides the empirical results and analysis. Finally, Section~\ref{sec:Conclusion} concludes this paper. 

\section{Related Works and Preliminary}
\label{sec:related}
\textbf{Communication Efficiency}: Sharing high-dimensional gradients across iterative rounds in federated learning is very costly. To reduce communication cost, various gradient quantization methods have been proposed. Algorithms like QSGD~\cite{alistarh2017qsgd}, TernGrad~\cite{wen2017terngrad} and ATOMO \cite{wang2018atomo} use stochastic quantisation schemes to ensure that the compressed stochastic gradient remains an unbiased approximation to the true gradient. More heuristic algorithms like 1BITSGD~\cite{seide20141} focuses more on practical performance. {\scriptsize SIGN}SGD with majority vote takes a different approach by directly employing the sign of the stochastic gradient~\cite{bernstein2018signsgd1,bernstein2018signsgd2}. For the homogeneous data distribution scenario, \cite{bernstein2018signsgd1,bernstein2018signsgd2} show theoretical and empirical evidence that {\scriptsize SIGN}SGD can converge well despite the biased approximation nature. For the heterogeneous data distribution, \cite{chen2019distributed} shows that the convergence of {\scriptsize SIGN}SGD is not guaranteed and proposes to add carefully designed noise to ensure convergence. However, compared with the full-precision gradients, {\scriptsize SIGN}SGD always incurs extra quantization error. To remedy this issue, \cite{seide20141} tracks quantisation errors and feeds them back into the subsequent updates. \cite{karimireddy2019error} proposes {\scriptsize EF\text{-}SIGN}SGD, which applies error compensation to {\scriptsize SIGN}SGD; however, this work only considered the single party scenario. \cite{zheng2019communication} further extends it to the multi-party scenario and establishes the convergence. 

\textbf{Privacy and Byzantine Robustness:} 
Previous works have shown that sharing gradients can result in serious privacy leakage~\cite{aono2018privacy,melis2019exploiting,zhu2019deep}. In addition to privacy, FL robustness is a big challenge that may hinder the applicability of FL, as there may exist Byzantine attackers in addition to the normal parties. FL with secure aggregation~\cite{bonawitz2017practical} is especially susceptible to poisoning attacks as the individual updates cannot be inspected. The appearance of {\scriptsize SIGN}SGD provides a promising method for both privacy and Byzantine robustness. {\scriptsize SIGN}SGD can largely reduce privacy leakage via gradients quantization. For robustness, \cite{bernstein2018signsgd2} shows that {\scriptsize SIGN}SGD can tolerate up to half ``blind" Byzantine parties who determine how to manipulate their gradients before observing the gradients. However, there is no theoretic privacy guarantee of {\scriptsize SIGN}SGD. 

\textbf{Local Differential Privacy (LDP):} In terms of privacy guarantee, a formal definition of LDP is provided in Definition~\ref{def:ldp}. The privacy budget $\epsilon$ captures the privacy loss consumed by the output of the algorithm: $\epsilon=0$ ensures perfect privacy in which the output is independent of its input, while $\epsilon\rightarrow\infty$ gives no privacy guarantee.

\begin{Definition}
\label{def:ldp}
Let $\mech: \mathcal{D} \to \mathcal{O}$ be a randomised algorithm mapping a data entry in $\mathcal{D}$ to $\mathcal{O}$. The algorithm $\mech$ is $(\epsilon,\delta)$-local differentially private if for all data entries $\vec{x}, \vec{x'} \in \mathcal{D}$ and all outputs $o \in \mathcal{O}$, we have
\begin{equation*}
\Pr\{\mech(\vec{x})=o\} \leq \exp(\epsilon) \Pr\{\mech(\vec{x'})=o\} + \delta
\end{equation*}
If $\delta=0$, $\mech$ is said to be $\epsilon$-local differentially private.
\end{Definition}

For every pair of adjacent inputs $\vec{x}$ and $\vec{x'}$, LDP requires that the distribution of $\mech(\vec{x})$ and $\mech(\vec{x'})$ are ``close'' to each other where closeness are measured by the privacy parameters $\epsilon$ and $\delta$. 

To the best our knowledge, none of the previous works can ensure efficiency, privacy, Byzantine robustness, accuracy and convergence simultaneously.

\section{{\scriptsize DP-SIGN}SGD}
\label{sec:DP-SIGNSGD}
In FL, at each communication round $t$, each party $m \in [M]$ computes local gradient $\boldsymbol{g}_{m}^{(t)}$ and sends it to the server. The server performs aggregation and sends the aggregated gradient back to all parties~\cite{mcmahan2016communication}. Finally, parties update their local models using the aggregated gradient. 

Considering communication efficiency and privacy, we adopt {\scriptsize SIGN}SGD and combine Gaussian Mechanism~\cite{dwork2014algorithmic} with {SIGN}SGD. More specifically, we employ the state-of-the-art \emph{Analytic Gaussian Mechanism} for each party in the FL system, which relaxes the constraint of $\epsilon<1$ in the traditional Gaussian Mechanism~\cite{dwork2014algorithmic} as follows.

\begin{theorem}
\label{Theorem:AGM}
(Analytic Gaussian Mechanism~\cite{balle2018improving}). Let $f: \mathbb{X} \rightarrow \mathcal{R}^d$ be a function with global $L_2$ sensitivity $\Delta$. For any $\epsilon \geq 0$ and $\delta \in [0,1]$, the Analytic Gaussian Mechanism $M(x) = f(x) + Z$ with $Z \sim N (0,\sigma^2 I)$ is $(\epsilon, \delta)$-DP if and only if

\begin{equation}\label{eq:Analytic_Gaussian}
\Phi \left(\frac{\Delta}{2\sigma}-\frac{\epsilon\sigma}{\Delta}\right) - e^\epsilon \Phi \left(-\frac{\Delta}{2\sigma}-\frac{\epsilon\sigma}{\Delta}\right) \leq \delta.
\end{equation}
\end{theorem}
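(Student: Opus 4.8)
The plan is to prove both directions at once by invoking the exact characterization of $(\epsilon,\delta)$-differential privacy through the hockey-stick divergence (equivalently, through the privacy-loss random variable), reducing the $d$-dimensional Gaussian problem to a scalar one by spherical symmetry, and then evaluating the scalar quantity in closed form. Concretely, the first step is to recall that $M$ is $(\epsilon,\delta)$-DP \emph{if and only if} $\sup_{x,x'}\big(\Pr[M(x)\in S]-e^{\epsilon}\Pr[M(x')\in S]\big)\le\delta$, where the supremum is over all measurable $S\subseteq\mathcal{R}^d$ and over all admissible input pairs $x,x'$. Writing $p$ and $q$ for the densities of $M(x)$ and $M(x')$, the inner supremum over $S$ equals the hockey-stick divergence $\int\max\{0,\,p-e^{\epsilon}q\}$ and is attained by the likelihood-ratio level set $S^\star=\{o:\log(p(o)/q(o))>\epsilon\}$. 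Hence it suffices to evaluate $\delta(\epsilon):=\sup_{x,x'}\big(\Pr[M(x)\in S^\star]-e^{\epsilon}\Pr[M(x')\in S^\star]\big)$ and to check that the claimed inequality is exactly $\delta(\epsilon)\le\delta$.

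Next I would reduce to one dimension. Fix a pair $x,x'$ and set $v=f(x')-f(x)$, so $\|v\|_2\le\Delta$ by the sensitivity bound. The log-likelihood ratio of $N(f(x),\sigma^2 I)$ against $N(f(x'),\sigma^2 I)$ at a point $o$ depends on $o$ only through $\langle o-f(x),v\rangle$; rotating coordinates so that $v$ is aligned with the first axis, the remaining $d-1$ coordinates are identically distributed under both hypotheses and cancel from both $\Pr[M(x)\in S^\star]$ and $\Pr[M(x')\in S^\star]$. The problem therefore collapses to the scalar pair $N(0,\sigma^2)$ versus $N(s,\sigma^2)$ with $s=\|v\|_2\in[0,\Delta]$, and a short monotonicity argument — the hockey-stick divergence of these two Gaussians is non-decreasing in $s$ — shows the supremum over pairs is attained at $s=\Delta$. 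So it is enough to treat, on $\mathbb{R}$, the case $f(x)=0$, $f(x')=\Delta$.

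For this scalar case the privacy loss is $L(o)=\log\frac{\mathcal N(o;0,\sigma^2)}{\mathcal N(o;\Delta,\sigma^2)}=\frac{\Delta^2-2\Delta o}{2\sigma^2}$, which is affine in $o$. Hence $L\sim N\!\big(\tfrac{\Delta^2}{2\sigma^2},\tfrac{\Delta^2}{\sigma^2}\big)$ when $o\sim N(0,\sigma^2)$ and $L\sim N\!\big(-\tfrac{\Delta^2}{2\sigma^2},\tfrac{\Delta^2}{\sigma^2}\big)$ when $o\sim N(\Delta,\sigma^2)$, and the set $S^\star=\{L>\epsilon\}$ is a half-line in $o$. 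Standardizing the two Gaussian tail probabilities gives $\Pr[M(x)\in S^\star]=\Phi\!\big(\tfrac{\Delta}{2\sigma}-\tfrac{\epsilon\sigma}{\Delta}\big)$ and $\Pr[M(x')\in S^\star]=\Phi\!\big(-\tfrac{\Delta}{2\sigma}-\tfrac{\epsilon\sigma}{\Delta}\big)$, so $\delta(\epsilon)$ is precisely the left-hand side of \eqref{eq:Analytic_Gaussian}. Combining with the first step yields the stated ``if and only if''.

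The arithmetic in the last step is routine; the delicate parts are (i) the exact-characterization lemma used in the first step — namely that the level set $S^\star$ is simultaneously worst case over $S$, and that the resulting condition is \emph{equivalent} to $(\epsilon,\delta)$-DP rather than merely sufficient, which must be cited or proved carefully (including the precise handling of the additive $\delta$ slack); and (ii) the monotonicity of the Gaussian hockey-stick divergence in the separation $s$, which pins the worst pair at $\|f(x)-f(x')\|_2=\Delta$ and requires its own short argument, e.g. differentiating $\delta(\epsilon)$ with respect to $\eta=s/\sigma$ and checking the sign, or a coupling/stochastic-dominance argument on the privacy-loss variable $L$.
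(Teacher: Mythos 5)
Your proposal is correct, but there is nothing in the paper to compare it against: Theorem~\ref{Theorem:AGM} is imported verbatim from Balle and Wang's analytic Gaussian mechanism paper~\cite{balle2018improving} and is used as a black box, with no proof given here. What you have written is essentially a faithful reconstruction of the original argument in that reference: the exact characterization of $(\epsilon,\delta)$-DP via the hockey-stick divergence attained on the privacy-loss level set $S^\star$, the rotation-invariance reduction of the $d$-dimensional Gaussian pair to the scalar pair $N(0,\sigma^2)$ versus $N(s,\sigma^2)$, the monotonicity of the divergence in the shift $s$ pinning the worst case at $s=\Delta$, and the closed-form evaluation $\Pr[M(x)\in S^\star]=\Phi\bigl(\tfrac{\Delta}{2\sigma}-\tfrac{\epsilon\sigma}{\Delta}\bigr)$, $\Pr[M(x')\in S^\star]=\Phi\bigl(-\tfrac{\Delta}{2\sigma}-\tfrac{\epsilon\sigma}{\Delta}\bigr)$, all of which I checked and which match the left-hand side of \eqref{eq:Analytic_Gaussian}. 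The two ``delicate parts'' you flag are exactly the lemmas proved in the cited paper, so citing them (or supplying the short arguments you sketch) closes the argument. One minor caveat you leave implicit: the ``only if'' direction needs the global sensitivity $\Delta$ to be attained, or approached, by some admissible pair $x,x'$ (otherwise the mechanism could be $(\epsilon,\delta)$-DP with strictly smaller noise); this is a standard convention in the statement but worth a sentence if you write the proof out in full.
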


In order to obtain $(\epsilon, \delta)$-DP for a function $f$ with global $L_2$ sensitivity $\Delta$, it is enough to add Gaussian noise with variance $\sigma^2$ satisfying Equation~\ref{eq:Analytic_Gaussian}. 

\subsection{The Differentially Private Compressor $dpsign$}
In this subsection, we present the differentially private version of {SIGN}SGD. Instead of sharing actual local gradient $\boldsymbol{g}_{m}^{(t)}$, each party $m$ quantizes the gradient with a differentially private 1-bit compressor $dpsign(\cdot)$ and sends $dpsign(\boldsymbol{g}_{m}^{(t)})$ to the server by following Equation~\ref{dpsignsgd}. The differentially private compressor $dpsign$ is formally defined as follows. The probability of each coordinate of the gradients mapping to $\{-1,1\}$ is sophistically designed to satisfy the LDP guarantee.

The main procedures for {\scriptsize DP-SIGN}SGD and {\scriptsize EF-DP-SIGN}SGD are given in Algorithm~\ref{algo:DP-SIGNSGD} and Algorithm~\ref{algo:Error-feedback-DP-SIGNSGD} respectively.
In particular, we introduce error decay rate $\lambda$ to incorporate error compensation.

\begin{Definition}
For any given gradient $\boldsymbol{g}_{m}^{(t)}$, the compressor $dpsign$ outputs $dpsign(\boldsymbol{g}_{m}^{(t)},\epsilon,\delta)$. The $i$-th entry of $dpsign(\boldsymbol{g}_{m}^{(t)},\epsilon,\delta)$ is given by
\begin{equation}\label{dpsignsgd}
\begin{split}
&dpsign(\boldsymbol{g}_{m}^{(t)},\epsilon,\delta)_{i} =
\begin{cases}
1, ~~~~~~~~~ \text{with probability $\Phi\big(\frac{(\boldsymbol{g}_{m}^{(t)})_{i}}{\sigma}\big)$} \\
-1,  ~~~~~~\text{with probability $1-\Phi\big(\frac{(\boldsymbol{g}_{m}^{(t)})_{i}}{\sigma}\big)$}\\
\end{cases}
\end{split}
\end{equation}
where $\Phi(\cdot)$ is the cumulative distribution function (cdf) of the normalized Gaussian distribution; $\sigma$ satisfies Equation~\ref{eq:Analytic_Gaussian}.
\end{Definition}

\begin{algorithm}
\caption{{\scriptsize DP-SIGN}SGD}
\label{algo:DP-SIGNSGD}
\begin{algorithmic}
\STATE \textbf{Input}: learning rate $\eta$, current global model vector $w^{(t)}$, each party's local data $D_m$, each party's gradient $\boldsymbol{g}_{m}^{(t)}$, {\scriptsize DP-SIGN}SGD compressor $dpsign(\cdot)$.
\STATE \textbf{Server:}
\STATE ~~\textbf{push} $\tilde{\boldsymbol{g}}^{(t)}= sign\big(\frac{1}{M}\sum_{m=1}^{M}dpsign(\boldsymbol{g}_{m}^{(t)})\big)$ \textbf{to} all parties
\STATE \textbf{Each party:}
\STATE ~~\textbf{compute} gradient: $\boldsymbol{g}_{m}^{(t)} \gets \texttt{SGD}(w^{(t)}, D_m)$ 
\STATE ~~\textbf{send} $dpsign(\boldsymbol{g}_{m}^{(t)})$ to server
\STATE ~~\textbf{update} $w^{(t+1)} = w^{(t)} - \eta\tilde{\boldsymbol{g}}^{(t)}$
\end{algorithmic}
\end{algorithm}

\begin{algorithm}[h]
\caption{Error-Feedback {\scriptsize DP-SIGN}SGD ({\scriptsize EF\text{-}DP-SIGNSGD})}
\label{algo:Error-feedback-DP-SIGNSGD}
\begin{algorithmic}
\STATE \textbf{Input}: learning rate $\eta$, current global model $w^{(t)}$, current residual error vector $\tilde{\boldsymbol{e}}^{(t)}$, each party's gradient $\boldsymbol{g}_{m}^{(t)}$, {\scriptsize DP-SIGN}SGD compressor $dpsign(\cdot)$, error decay rate $\lambda$.
\STATE \textbf{Server:}
\STATE ~~\textbf{push} $\tilde{\boldsymbol{g}}^{(t)} = sign\big(\frac{1}{M}\sum_{m=1}^{M}dpsign(\boldsymbol{g}_{m}^{(t)})+\tilde{\boldsymbol{e}}^{(t)}\big)$ \textbf{to} all parties
\STATE ~~\textbf{update residual error:}
\begin{equation*}\label{residualupdate}
\tilde{\boldsymbol{e}}^{(t+1)} = \lambda * \tilde{\boldsymbol{e}}^{(t)} + (1-\lambda)*[\frac{1}{M}\sum_{m=1}^{M}dpsign(\boldsymbol{g}_{m}^{(t)}) - \frac{1}{M}\tilde{\boldsymbol{g}}^{(t)}]
\end{equation*}
\STATE \textbf{Each party:}
\STATE ~~\textbf{compute} gradient: $\boldsymbol{g}_{m}^{(t)} \gets \texttt{SGD}(w^{(t)}, D_m)$ 
\STATE ~~\textbf{send} $dpsign(\boldsymbol{g}_{m}^{(t)})$ to server
\STATE ~~\textbf{update} $w^{(t+1)} = w^{(t)} - \eta\tilde{\boldsymbol{g}}^{(t)}$
\end{algorithmic}
\end{algorithm}

\begin{theorem}
The proposed compressor $dpsign(\cdot,\epsilon,\delta)$ is $(\epsilon,\delta)$-differentially private for any $\epsilon>0$, and $\delta \in (0,1)$.
\end{theorem}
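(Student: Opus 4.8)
The plan is to recognize the compressor $dpsign$ as a \emph{post-processing} of the Analytic Gaussian Mechanism of Theorem~\ref{Theorem:AGM}, so that no new privacy analysis is required: the privacy budget is simply inherited from Theorem~\ref{Theorem:AGM} via the post-processing immunity of differential privacy. To set things up, fix a party and write $\boldsymbol{g}$ for its gradient $\boldsymbol{g}_m^{(t)}\in\mathcal{R}^d$, let $\Delta$ be the associated global $L_2$ sensitivity, and let $\sigma$ be the scale used in the definition of $dpsign$, i.e.\ the value satisfying Equation~\ref{eq:Analytic_Gaussian} for the given $\epsilon>0$ and $\delta\in(0,1)$; such a $\sigma$ exists because the left-hand side of Equation~\ref{eq:Analytic_Gaussian} tends to $0$ as $\sigma\to\infty$. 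Draw $\boldsymbol{Z}\sim N(0,\sigma^2 I)$ and let $M(\boldsymbol{g})=\boldsymbol{g}+\boldsymbol{Z}$ be the Analytic Gaussian Mechanism applied to $\boldsymbol{g}$.

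The first step is to check that thresholding $M(\boldsymbol{g})$ coordinate-wise at zero reproduces exactly the law of $dpsign(\boldsymbol{g},\epsilon,\delta)$. For each coordinate $i$, since $Z_i\sim N(0,\sigma^2)$ is symmetric about $0$,
\[
\Pr\{\operatorname{sign}(g_i+Z_i)=1\}=\Pr\{Z_i>-g_i\}=\Pr\{Z_i<g_i\}=\Phi\!\Big(\frac{g_i}{\sigma}\Big),
\]
and $\Pr\{g_i+Z_i=0\}=0$; moreover the coordinates of $\boldsymbol{Z}$ are mutually independent, so the joint law of $\operatorname{sign}(M(\boldsymbol{g}))$ is the product of these per-coordinate Bernoulli-type laws, which is precisely the distribution prescribed in Equation~\ref{dpsignsgd}. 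Hence $dpsign(\boldsymbol{g},\epsilon,\delta)$ and $\operatorname{sign}(M(\boldsymbol{g}))$ are equal in distribution.

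It then remains to conclude. By Theorem~\ref{Theorem:AGM}, the choice of $\sigma$ makes $M$ an $(\epsilon,\delta)$-DP mechanism for releasing $\boldsymbol{g}$. The coordinate-wise sign map is a fixed, data-independent function, hence a legitimate post-processing; and $(\epsilon,\delta)$-DP is preserved under post-processing, since for any $S\subseteq\{-1,1\}^d$ one has $\Pr\{\operatorname{sign}(M(\boldsymbol{x}))\in S\}=\Pr\{M(\boldsymbol{x})\in\operatorname{sign}^{-1}(S)\}\le e^{\epsilon}\Pr\{M(\boldsymbol{x'})\in\operatorname{sign}^{-1}(S)\}+\delta$. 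Because the output set $\{-1,1\}^d$ is finite, this event-wise bound is equivalent to the pointwise bound of Definition~\ref{def:ldp}. Therefore $\operatorname{sign}(M(\boldsymbol{g}))$, and hence $dpsign(\boldsymbol{g},\epsilon,\delta)$ by the distributional identity above, is $(\epsilon,\delta)$-differentially private, as claimed.

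The main obstacle is the distributional identification in the second step: one must verify that the explicit per-coordinate acceptance probabilities $\Phi(g_i/\sigma)$ prescribed in Equation~\ref{dpsignsgd} coincide exactly with those obtained by adding isotropic Gaussian noise of scale $\sigma$ and then taking signs — this is precisely the bridge that lets the privacy guarantee of Theorem~\ref{Theorem:AGM} be inherited verbatim rather than re-derived. A secondary subtlety worth spelling out is that one common $\sigma$ is used for all $d$ coordinates at once, so the guarantee covers the whole vector $\boldsymbol{g}$ in the $L_2$-sensitivity sense of Theorem~\ref{Theorem:AGM} rather than only coordinate-by-coordinate; this is automatic since $M$ adds the full vector $\boldsymbol{Z}\sim N(0,\sigma^2 I)$ simultaneously.
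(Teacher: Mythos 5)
Your proof is correct and, at bottom, crosses the same bridge as the paper's: both arguments identify the acceptance probability $\Phi\big(g_i/\sigma\big)$ with the probability that the Gaussian-perturbed value $g_i+Z_i$, $Z_i\sim N(0,\sigma^2)$, is positive, so that the privacy of $dpsign$ is inherited from the Analytic Gaussian Mechanism of Theorem~\ref{Theorem:AGM}. The difference is in how the conclusion is drawn. The paper never invokes post-processing by name; it rewrites $\Phi(a/\sigma)$ and $\Phi(b/\sigma)$ as integrals of shifted Gaussian densities over $(0,\infty)$ (over a region $D$ determined by $\boldsymbol{v}$ in the vector case), forms the likelihood ratio directly, and asserts via Theorem~\ref{Theorem:AGM} that this ratio lies in $[e^{-\epsilon},e^{\epsilon}]$ ``with probability at least $1-\delta$.'' You instead establish the distributional identity between $dpsign(\boldsymbol{g},\epsilon,\delta)$ and $\operatorname{sign}(\boldsymbol{g}+\boldsymbol{Z})$ with $\boldsymbol{Z}\sim N(0,\sigma^2 I)$ and then appeal to post-processing immunity of $(\epsilon,\delta)$-DP. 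Your packaging buys two things the paper's phrasing glosses over: the $\delta$ enters in the standard additive form $\Pr\{\operatorname{sign}(M(\boldsymbol{x}))\in S\}\le e^{\epsilon}\Pr\{\operatorname{sign}(M(\boldsymbol{x'}))\in S\}+\delta$ rather than through an informal ``ratio bounded with probability $1-\delta$'' reading of approximate DP (which is not literally the definition), and the multi-dimensional case is handled uniformly without describing the integration region. One minor overstatement on your side: the event-wise and pointwise bounds are not ``equivalent'' in general even for finite output spaces (passing from pointwise to event-wise costs a factor of $|S|$ in $\delta$), but the direction you actually use---events imply singletons---is immediate, so your conclusion stands.
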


\begin{proof}
We start from the one-dimension scenario and consider any $a,b$ that satisfy $||a - b||_{2} \leq \Delta_2$. Without loss of generality, assume that $dpsign(a,\epsilon,\delta)=dpsign(b,\epsilon,\delta)=1$. Then we have
\begin{equation}
\begin{split}
P(dpsign(a,\epsilon,\delta)=1) = \Phi\bigg(\frac{a}{\sigma}\bigg) = \int_{-\infty}^{a}\frac{1}{\sqrt{2\pi}\sigma}e^{-\frac{x^2}{2\sigma^2}}dx,\\
P(dpsign(b,\epsilon,\delta)=1) = \Phi\bigg(\frac{b}{\sigma}\bigg) = \int_{-\infty}^{b}\frac{1}{\sqrt{2\pi}\sigma}e^{-\frac{x^2}{2\sigma^2}}dx.
\end{split}
\end{equation}

\begin{equation}
\begin{split}
\frac{P(dpsign(a,\epsilon,\delta)=1)}{P(dpsign(b,\epsilon,\delta)=1)} = \frac{\int_{-\infty}^{a}e^{-\frac{x^2}{2\sigma^2}}dx}{\int_{-\infty}^{b}e^{-\frac{x^2}{2\sigma^2}}dx} = \frac{\int_{0}^{\infty}e^{-\frac{(x-a)^2}{2\sigma^2}}dx}{\int_{0}^{\infty}e^{-\frac{(x-b)^2}{2\sigma^2}}dx}.
\end{split}
\end{equation}
According to Theorem~\ref{Theorem:AGM}, given the parameters $\epsilon, \delta$ and $\sigma$, we can derive that $e^{-\epsilon} \leq \big|\frac{P(dpsign(a,\epsilon,\delta)=1)}{P(dpsign(b,\epsilon,\delta)=1)}\big| \leq e^{\epsilon}$ with probability at least $1-\delta$.

For the multi-dimension scenario, consider any vector $\boldsymbol{a}$ and $\boldsymbol{b}$ such that $||\boldsymbol{a} - \boldsymbol{b}||_{2} \leq \Delta_2$ and $\boldsymbol{v} \in \{-1,1\}^{d}$, we have
\begin{equation}
\begin{split}
\frac{P(dpsign(\boldsymbol{a},\epsilon,\delta)=\boldsymbol{v})}{P(dpsign(\boldsymbol{b},\epsilon,\delta)=\boldsymbol{v})} = \frac{\int_{D}e^{-\frac{||\boldsymbol{x}-\boldsymbol{a}||_{2}^2}{2\sigma^2}}d\boldsymbol{x}}{\int_{D}e^{-\frac{||\boldsymbol{x}-\boldsymbol{b}||_{2}^2}{2\sigma^2}}d\boldsymbol{x}},
\end{split}
\end{equation}
where $D$ is some integral area depending on $\boldsymbol{v}$. Similarly, it follows that $e^{-\epsilon} \leq \big|\frac{P(dpsign(\boldsymbol{a},\epsilon,\delta)=\boldsymbol{v})}{P(dpsign(\boldsymbol{b},\epsilon,\delta)=\boldsymbol{v})}\big| \leq e^{\epsilon}$ with probability at least $1-\delta$.
\end{proof}

\section{Performance Evaluation}
\label{sec:Performance}

\subsection{Dataset and Experimental Setup}
For experiments, we investigate two benchmark image datasets: MNIST\footnote{\url{http://yann.lecun.com/exdb/mnist/}} and CIFAR-10\footnote{\url{https://www.cs.toronto.edu/~kriz/cifar.html}}.
We run our experiments with 31 normal parties, and partition the training dataset according to the labels. The number of labels $c$ assigned to each party can be used as a metric to measure the data heterogeneity. We randomly sample examples from each label without replacement. Each party's data size depends on both $c$ and the size of the training data associated with each label. We compare our proposed algorithms with two baselines: {\scriptsize SIGN}SGD \cite{bernstein2018signsgd1} and FedAvg \cite{mcmahan2016communication}. For fair comparison, we set the same hyper-parameters (batch size as 256, local epoch as 1, and learning rate as 0.005 and 0.01 for MNIST and CIFAR-10 respectively) for all baselines. We set error decay rate as $\lambda=0.5$ when $c=\{2,4,10\}$, and omit it when $c=1$.

\begin{figure*}[tb]
\begin{subfigure}[ht]{0.23\textwidth}
\centering
\includegraphics[scale=0.26]{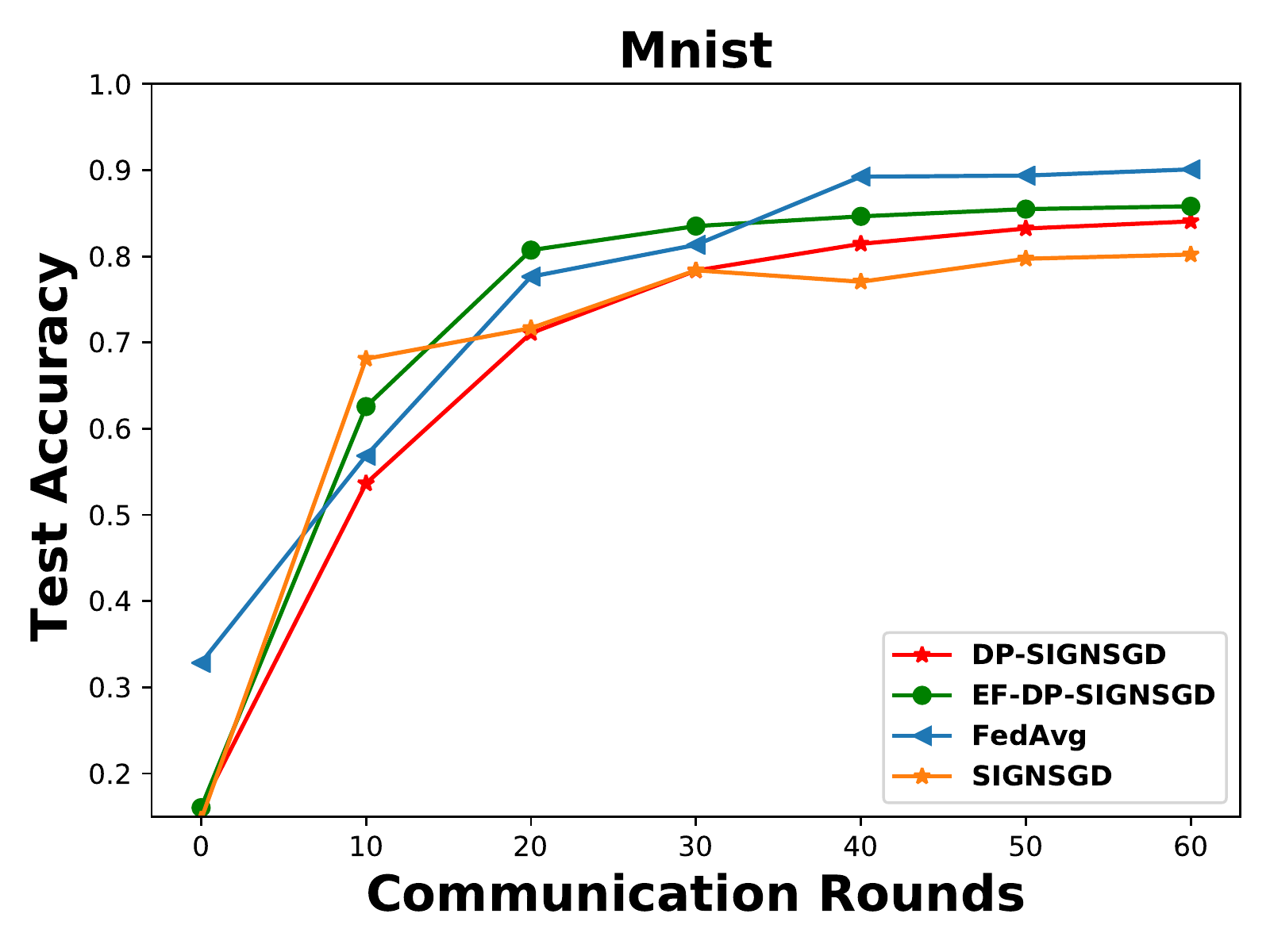}
 \end{subfigure}
\begin{subfigure}[ht]{0.23\textwidth}
\centering
\includegraphics[scale=0.26]{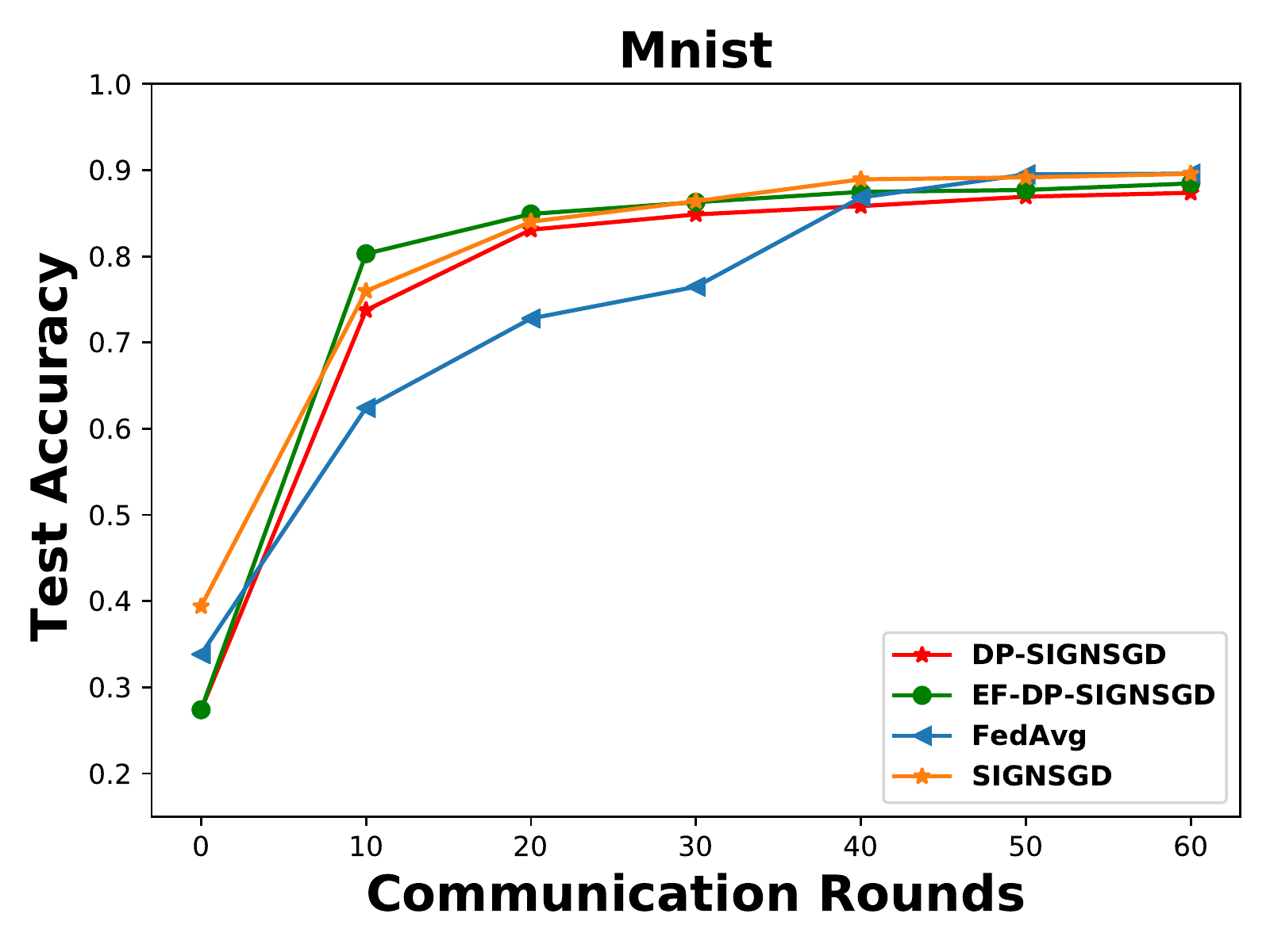}
 \end{subfigure}
 \begin{subfigure}[ht]{0.23\textwidth}
\centering
\includegraphics[scale=0.26]{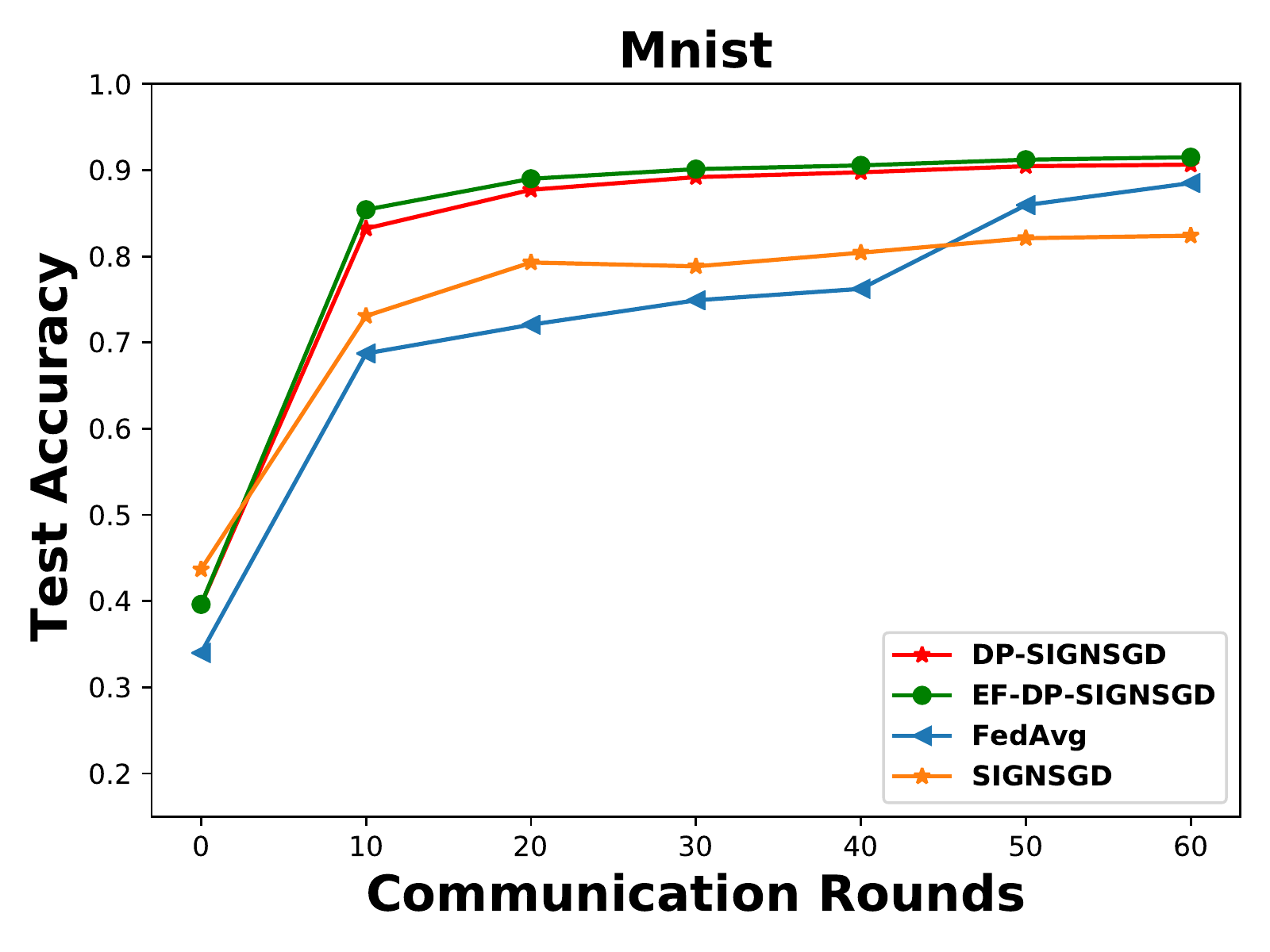}
 \end{subfigure}
 \begin{subfigure}[ht]{0.23\textwidth}
\centering
\includegraphics[scale=0.26]{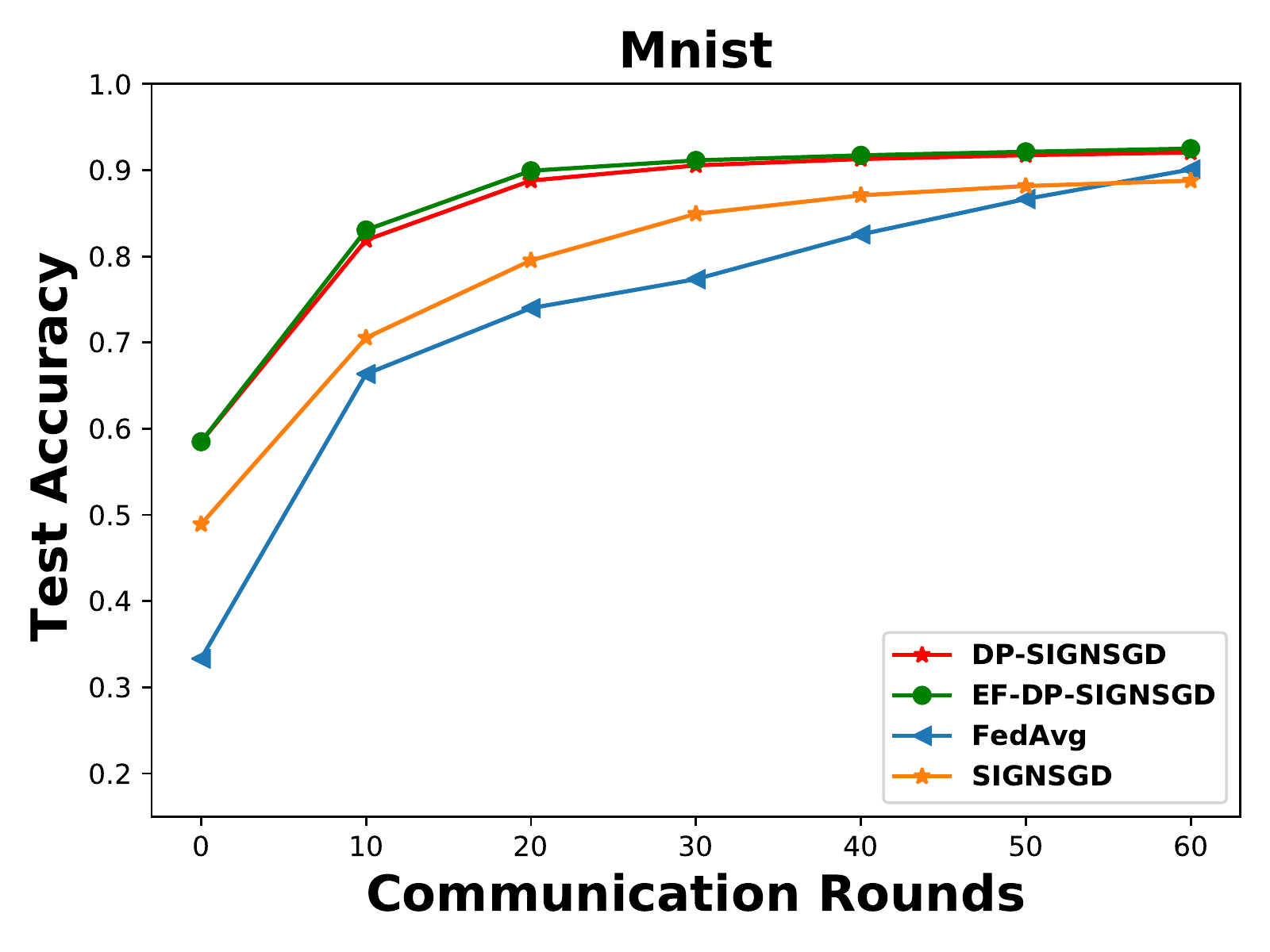}
 \end{subfigure}
\caption{MNIST test accuracy without Byzantine attackers. We fix $\epsilon=1, \delta = 10^{-5}$ for {\scriptsize DP-SIGN}SGD, and {\scriptsize EF-DP-SIGN}SGD. From left to right, each party has $c=\{1, 2, 4, 10\}$ classes.} 
\label{mnist_comparison}
\end{figure*}

In terms of local model architecture, we consider a simple feed-forward neural network with 64 hidden units. We remark that our main purpose is not to achieve state-of-the-art accuracy, but to validate the effectiveness of our proposed methods. For {\scriptsize DP-SIGN}SGD and {\scriptsize EF\text{-}DP-SIGN}SGD, we follow the DPSGD algorithm~\cite{abadi2016deep} to clip the gradient of each example within a fixed $L_{2}$ norm bound, such that the sensitivity $\Delta_{2}$ is bounded by a chosen threshold. We select $L_{2}$ norm bound as 4 and 1 for MNIST and CIFAR-10 respectively\footnote{code is available at: https://github.com/lingjuanlv/DP-SIGNSGD.git}.

\subsection{Experimental Results}

\textbf{Comparison with Baselines.}
As shown in Fig.~\ref{mnist_comparison}, for MNIST, {\scriptsize EF\text{-}DP-SIGN}SGD consistently outperforms {\scriptsize DP-SIGN}SGD in all cases. In addition, both {\scriptsize DP-SIGN}SGD and {\scriptsize EF\text{-}DP-SIGN}SGD converge well in spite of their biased nature. Similar results can also be observed from Fig.~\ref{cifar_comparison} for CIFAR-10.
We hypothesise that error-feedback mechanism in {\scriptsize EF\text{-}DP-SIGN}SGD takes effect and the injected noise in {\scriptsize EF\text{-}DP-SIGN}SGD and {\scriptsize DP-SIGN}SGD act as a regularization technique. Moreover, we notice that FedAvg does not necessarily outperform all the other baselines under the same hyper-parameter setting. FedAvg usually performs better by simultaneously increasing local training epochs and decreasing local mini-batch sizes, given a much larger starting learning rate with exponential decay~\cite{xu2020towards}. In terms of communication efficiency, compared with FedAvg using full-precision gradients, all the {\scriptsize SIGN}SGD based methods can reduce the communication overhead per round by 32$\times$, as all communication to and from the parameter server is compressed to one bit.

\begin{figure}
\centering
\includegraphics[scale=0.26]{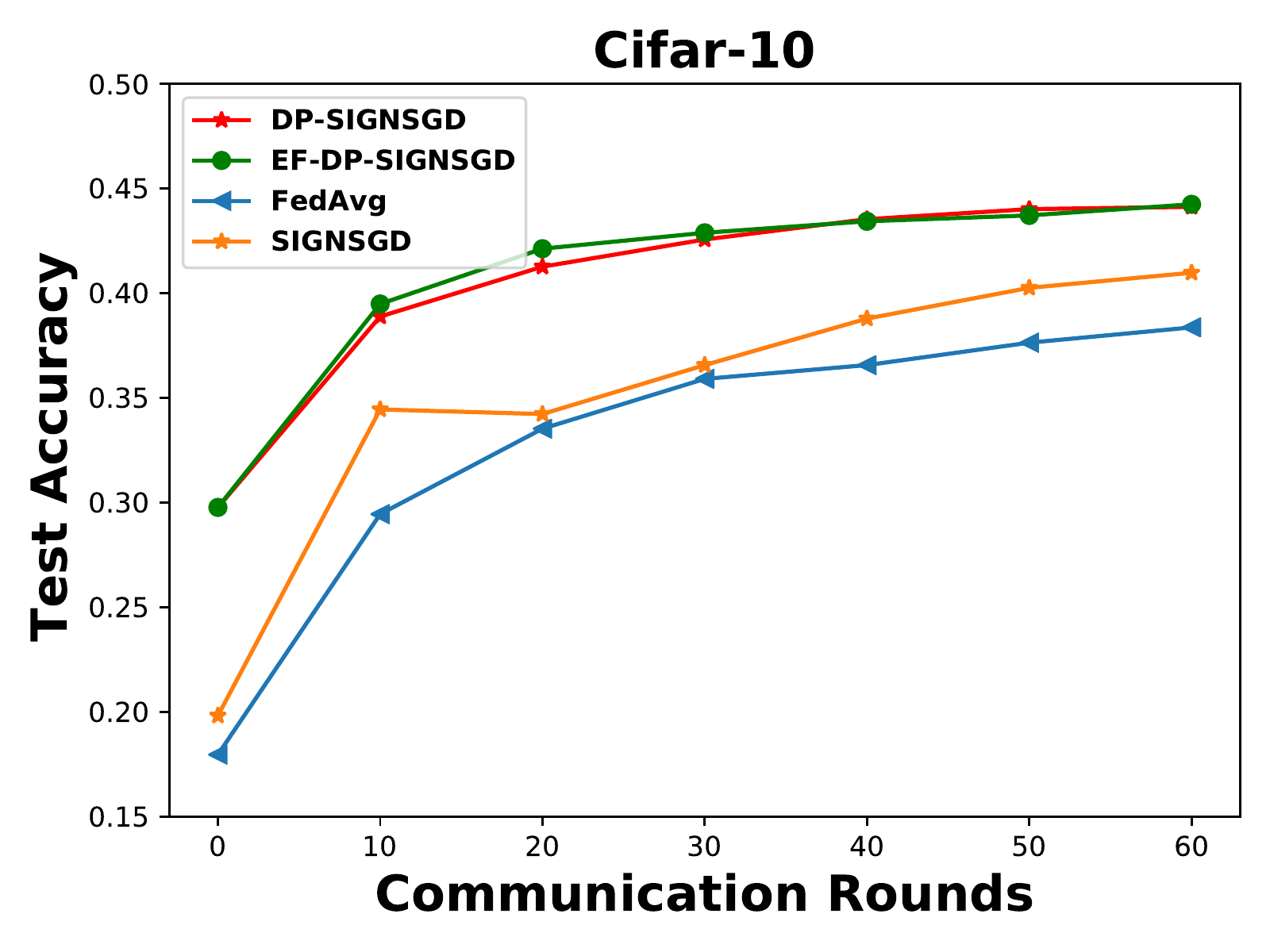}
\caption{CIFAR-10 test accuracy without Byzantine attackers. We fix $\epsilon=1, \delta = 10^{-5}$ for {\scriptsize DP-SIGN}SGD, and {\scriptsize EF-DP-SIGN}SGD. Each party has $c=10$ classes.}
\label{cifar_comparison}
\end{figure}

\textbf{Impact of Privacy Budget $\epsilon$}.
We examine the impact of privacy budget $\epsilon$ in Table~\ref{dp_sign_table} when $\delta = 10^{-5}$. Here $\epsilon$ measures per round privacy guarantee of each party. It can be observed that {\scriptsize EF-DP-SIGN}SGD outperforms {\scriptsize DP-SIGN}SGD for all $\epsilon$'s.

\begin{table}[ht]
\caption{The performance of MNIST {\scriptsize DP-SIGN}SGD and {\scriptsize EF-DP-SIGN}SGD under different $\epsilon$ when $\delta = 10^{-5}$, without Byzantine attackers. Each party has $c=4$ classes.}
\label{dp_sign_table}
\begin{center}
\begin{small}
\begin{tabular}{lcccccc}
 & $\epsilon$=0.05 & $\epsilon$=0.1 & $\epsilon$=0.5 & $\epsilon$=1 & $\epsilon$=2 \\
\midrule
{\scriptsize DP-SIGN}SGD & 75.54 & 82.14 & 89.33 & 90.64 & 91.95 \\ 
{\scriptsize EF-DP-SIGN}SGD & 78.68 & 84.93 & 90.32 &  91.50 & 92.84  \\ 
\bottomrule
\end{tabular}
\end{small}
\end{center}
\vskip -0.2in
\end{table}

\textbf{Byzantine Resilience}.
In addition to $M$ normal parties, we assume that there exist $B$ Byzantine attackers. Instead of following $dpsign$, the Byzantine attackers can take arbitrary compressors. In this work, we consider two types of adversaries: (1) \emph{Random} adversaries who randomise the sign of each coordinate of the gradients; (2) \emph{Negative} adversaries who invert their gradients. Note that we neglect the adversaries that arbitrarily rescale their gradients, as SIGN{SGD} does not aggregate the magnitudes of the values in the gradients but only the signs, which is inherently robust to all adversaries in this class. On the other hand, SGD is certainly not robust since an adversary could set the gradient to infinity and corrupt the entire model.

For \emph{Negative} adversaries, as the Byzantine attackers have access to the average gradients of all the $M$ normal parties (i.e., $\boldsymbol{g}_{j}^{(t)} = \frac{1}{M}\sum_{m=1}^{M}\boldsymbol{g}_{m}^{(t)}$, $\forall j \in [B]$), we assume that each Byzantine attacker $j$ shares the opposite signs of the true gradients, i.e., $byzantine\text{-}sign(\boldsymbol{g}_{j}^{(t)}) = -sign(\boldsymbol{g}_{j}^{(t)})$.

\begin{Remark}
Note that previous work has shown that SIGN{SGD} with Majority Vote~\cite{bernstein2018signsgd2} is more roust than MULTI-KRUM~\cite{blanchard2017machine}, so we only need to compare with SIGN{SGD} to show the efficacy of our methods. 
\end{Remark}

The test accuracy results of MNIST {\scriptsize EF-DP-SIGN}SGD with $\epsilon=1$ under varying Byzantine attackers are reported in Fig.~\ref{Byzantine_impact}. As we find that \emph{Negative} adversaries are always stronger than \emph{Random} adversaries, so we only show the Byzantine resilience results against \emph{Negative} adversaries for illustration purpose. As observed from Fig.~\ref{Byzantine_impact}, the Byzantine resilience of {\scriptsize EF-DP-SIGN}SGD keeps relatively stable even when the Byzantine attackers account for 40\% of all parties.

\begin{figure}
\centering
\includegraphics[scale=0.26]{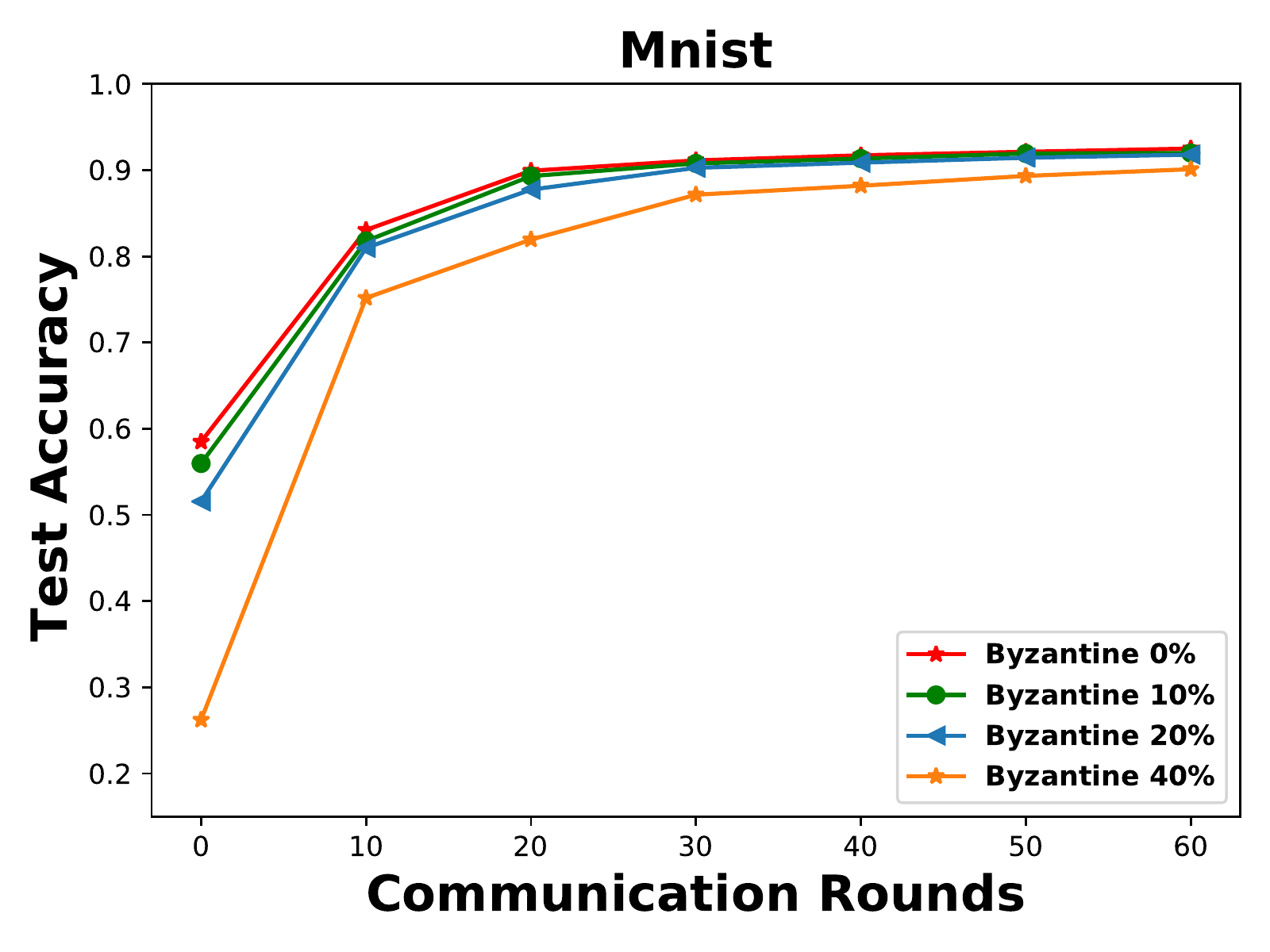}
\caption{The performance of {\scriptsize EF-DP-SIGN}SGD with different percentage of Byzantine parties. Each party has $c=10$ classes.}
\label{Byzantine_impact}
\end{figure}

\section{Conclusion}
\label{sec:Conclusion}
We propose an efficient, private, and Byzantine robust compressor by extending {\scriptsize SIGN}SGD to {\scriptsize DP-SIGN}SGD. We further incorporate the error-feedback mechanism to improve accuracy. We theoretically prove the privacy guarantee of the proposed algorithms, and empirically demonstrate that our proposed {\scriptsize DP-SIGN}SGD and {\scriptsize EF-DP-SIGN}SGD can achieve multiple goals simultaneously. We hope that our proposed algorithms can advance the FL algorithms towards the privacy-preserving, efficient, robust real-world applications. Theoretical convergence analysis would be the next-step work.

\bibliographystyle{IEEEbib}
\bibliography{icassp}
\end{sloppypar} 

\end{document}